\newcommand{\ilogit}{\ensuremath{\mathrm{logit}^{-1}}\xspace}
\newcommand{\logit}{\ensuremath{\mathrm{logit}}\xspace}
\begin{document}

\title{A Perfect Sampling Method for Exponential Family Random Graph Models\thanks{This research was supported by ONR award \#N00014-08-1-1015 and NSF award IIS-1526736.} \thanks{The author would like to thank Johan Koskinen, Mark Handcock, David Hunter, and John Skvoretz for their helpful input.}
}

\author{
Carter T. Butts\thanks{Department of Sociology, Statistics, and EECS and Institute for Mathematical Behavioral Sciences; University of California, Irvine; SSPA 2145; Irvine, CA 92697; \texttt{buttsc@uci.edu}}
}



\date{10/11/17; to appear in the \emph{Journal of Mathematical Sociology}
}
\maketitle



\begin{abstract}
Generation of deviates from random graph models with non-trivial edge dependence is an increasingly important problem.  Here, we introduce a method which allows perfect sampling from random graph models in exponential family form (``exponential family random graph'' models), using a variant of Coupling From The Past.  We illustrate the use of the method via an application to the Markov graphs, a family that has been the subject of considerable research.  We also show how the method can be applied to a variant of the biased net models, which are not exponentially parameterized.\\[5pt]
\emph{Keywords:} perfect sampling, exponential random graphs, discrete exponential families, Markov chain Monte Carlo, coupling from the past, biased nets
\end{abstract}

\newtheorem{theorem}{Theorem}




\section{Introduction}

Simulation of random graph processes is an increasingly important problem in many fields.  This is particularly true in the social and biological sciences, where graphs are used to represent such diverse phenomena as interpersonal communication, collaboration among organizations, trophic systems, and protein-protein interaction networks.  Networks encountered in such fields typically exhibit patterns of complex dependence, in the sense that the state of one edge frequently depends on the state of other edges in the network (even when other aspects of structure are taken into account).  Parameterization of models for such networks is a difficult problem, and has spawned a range of approaches (see, e.g., \citet{watts.strogatz:n:1998,barabasi.albert:s:1999,hoff.et.al:jasa:2002,newman:siamr:2003,skvoretz.et.al:sn:2004,butts:jms:2015}).  A particularly significant trend in recent years has been the use of discrete exponential families for the parameterization of networks with complex dependence, following the important early work of \citet{holland.leinhardt:jasa:1981} \citet{holland.et.al:sn:1983}, \citet{frank.strauss:jasa:1986}, and \citet{wasserman.pattison:p:1996}.  As with parallel developments in the spatial statistics literature \citep{besag:ts:1975,ripley:jrssB:1977,strauss:siam:1986}, discrete exponential families have provided a ``lingua franca'' for the description of random graph models, along with a fairly well-developed body of inferential and computational theory.  These attractive features have led to a significant expansion in the use of exponentially parameterized random graph models (frequently called ``exponential family random graph'' or ERG models) within the scientific literature.

Despite their obvious utility, ERG models pose some pragmatic challenges.  In particular, few  properties of most non-trivial ERG models are susceptible to analytical treatment, and simulation is thus required to study ERG behavior.  This is true for both deductive (i.e., discovering model properties) and inferential (i.e., estimating model parameters from data) applications.  Direct simulation of ERG models is generally infeasible due to the presence of an unknown normalizing factor, which involves summation of an extremely rough function across the (very large) support.  The practical solution to this problem has been the use of Markov chain Monte Carlo (MCMC) methods, which allow for approximate simulation from the target distribution without the need to compute the normalizing factor.  Unfortunately, convergence of such procedures is non-trivial to assess, and may be poor when models are near the ``degenerate'' regions of their parameter spaces \citep{strauss:siam:1986,snijders:joss:2002,handcock:ch:2003,bhamidi.et.al:aap:2011}.  Even where degeneracy is not a concern, MCMC is ill-suited to generating samples of provably high quality for use in algorithm evaluation, method testing, or high-precision applications.  Here, we propose to address this problem via a perfect sampling method, based on the Coupling From The Past (CFTP) technique of \citet{propp.wilson:rsa:1996}.  This method can be used with any ERG model, but is particularly well-suited to ERGs whose statistics take the form of subgraph counts.  Such statistics arise naturally via the Hammersley-Clifford Theorem \citep{besag:jrssB:1974} when ERGs are parameterized using dependency hypotheses (see, e.g., \citet{pattison.robins:sm:2002,wasserman.robins:ch:2005}) and are the basis for important ERG families such as the Markov graphs \citep{frank.strauss:jasa:1986}.  The method can also be used with certain random graph families which are not parameterized in ERG form, but which can be specified via their edgewise full conditionals; we discuss this in Section~\ref{sec_bn} in the context of the ``biased net'' models of \citet{rapoport:bmb:1949a,rapoport:bmb:1949b,rapoport:bmb:1950}.

The remainder of the paper proceeds as follows.  We begin with general concepts and notation, including a brief overview of ERG models.  We then present our simulation method, along with techniques for more efficient computation on models based on subgraph census statistics.  In the following section, we apply our simulation method to the Markov graphs, illustrating its use with simulations from the edge clustering and triangle models.  Finally, we close with a brief discussion of extensions and generalizations, including the generation of deviates from locally-parameterized biased net models.

\subsection{Notation and Core Concepts}

For the most part, we will focus here on simple graphs of finite order.  These may be represented by ordered pairs $G=(V,E)$, where $V$ is a set of \emph{vertices} and $E$ is a set of \emph{edges} on $V$.  For a simple graph, the elements of $E$ are two-element subsets of $V$.  Another important class of objects is the class of directed graphs (or digraphs), for which $E$ is a subset of ordered pairs on $V$.  When working with a fixed vertex set, we will let $n=|V|$ be the \emph{order} of $G$ (with $|\cdot|$ denoting cardinality).  In practice, it is usually convenient to represent graphs via their \emph{adjacency matrices}; the adjacency matrix, $y$, of graph $G$ is the $n \times n$ binary matrix such that $y_{ij}=1$ if $\{i,j\} \in E$ and $y_{ij}=0$ otherwise.  For simple graphs, it is clearly the case that $y_{ij}=y_{ji}$ and $y_{ii}=0$.  The latter constraint is preserved for simple digraphs, but not the former.  We will frequently need to refer to random graphs (undirected or directed), i.e., random variables whose sample space consists of a graph set.  We describe these via their (random) adjacency matrices, using capital letters -- thus, if $Y$ is the adjacency matrix of an undirected random graph, $Y_{ij}$ is the random variable indicating the presence or absence of an $\{i,j\}$ edge.  Likewise, we can describe a stochastic process on a set of graphs (a random graph process) by a sequence of random adjacency matrices $Y^{(1)},Y^{(2)},\ldots$.  Throughout this text, we will use parenthetical superscripts to index both sequences of variables and their realizations within a random process; thus, a realization of a random graph process $Y^{(1)},Y^{(2)},\ldots$ would be denoted $y^{(1)},y^{(2)},\ldots$.

Given two graphs $G$ and $H$, we say that $G$ is a \emph{subgraph} of $H$ if $V(G) \subseteq V(H)$ and $E(G) \subseteq E(H)$.  Clearly, if $y$ and $y'$ are the adjacency matrices of $G$ and $H$ (respectively), $G \subseteq H$ then implies that $y_{ij} \le y'_{ij}$ for all $i,j$; we thus denote the latter relationship by the $\subseteq$ operator as well, where there is no danger of confusion.  Let $K_n$ and $N_n$ denote the \emph{complete} and \emph{empty} graphs of order $n$ (i.e., the order-$n$ graphs having respectively all or no edges).  Then $\subseteq$ forms a partial order on the set of order-$n$ graphs, with unique upper bound $K_n$ and unique lower bound $N_n$.  We shall make use of this observation in the presentation which follows.  For expository purposes, it is also convenient to introduce a simplified notation for graphs which are perturbed by forcing a given edge to be present or absent, and for the edge variables of a graph excluding a particular element.  We do this via adjacency matrices.  Let $Y^c_{ij}$ refer to the set of all edge variables other than $\{i,j\}$ (or $(i,j)$ in the directed case) in random adjacency matrix $Y$; the corresponding observations are denoted $y^c_{ij}$.  $Y^+_{ij}$ is then defined as the random matrix with $\left(Y^+_{ij}\right)_{kl}=Y_{kl}$ for $\{i,j\} \neq \{k,l\}$ (directed case: $(i,j) \neq (k,l)$) and $Y_{ij}=1$.  $Y^-_{ij}$ is similarly defined as the random matrix such that $\left(Y^-_{ij}\right)_{kl}=Y_{kl}$ for $\{i,j\} \neq \{k,l\}$ (respectively, $(i,j) \neq (k,l)$) and $Y^-_{ij}=0$.  We apply this notation to realizations as well, i.e. $y^+_{ij}$ and $y^-_{ij}$ are equal to $y^c_{ij}$ with $y^+_{ij}=1$ and $y^-_{ij}=0$, and to matrix sets, i.e. $\mathcal{A}^+_{ij}=\left\{y^+_{ij}; y \in \mathcal{A}\right\}$ and $\mathcal{A}^-_{ij}=\left\{y^-_{ij}; y \in \mathcal{A}\right\}$.

Our principal concern within the paper will be the simulation of draws from exponentially parameterized random graph distributions on graphs of fixed order.  Let $Y$ be an order-$n$ adjacency matrix, and let $\mathcal{Y}_n$ be the set of such matrices.  Then we may write the pmf of $Y$ in exponential family form\footnote{For simplicity, we take $Y$ to be parameterized with respect to the counting measure on $\mathcal{Y}_n$.  Where other reference measures are desired (e.g. Krivitsky's (\citeyear{krivitsky.et.al:statm:2011}) constant mean degree reference), this can be accomplished by folding them into $t$.} as
\begin{equation}
\Pr\left(Y=y\left|t,\theta\right.\right) = \frac{\exp\left(\theta^T t\left(y\right)\right)}{\sum_{y' \in \mathcal{Y}_n} \exp\left(\theta^T t\left(y'\right)\right)} I_{\mathcal{Y}_n}(y), \label{eq_erg}
\end{equation}
where $t: \mathcal{Y}_n \mapsto \mathbb{R}^p$ is a vector of statistics, $\theta \in \mathbb{R}^p$ is a vector of parameters, and $I_{\mathcal{Y}_n}$ is an indicator function for membership in $\mathcal{Y}_n$.  In general, computation involving this pmf is complicated by the practical impossibility of directly computing the normalizing factor, $\sum_{y' \in \mathcal{Y}_n} \exp\left(\theta^T t\left(y'\right)\right)$: since $|\mathcal{Y}_n|$ is of order $2^{n^2}$, explicit summation is prohibitive for all but the smallest graphs.  Moreover, the considerable roughness of $\exp\left(\theta^T t\left(y\right)\right)$ over the support of $Y$ makes simple Monte Carlo quadrature schemes ineffective.  Typically, simulation schemes exploit the fact that the normalizing factor is not needed to compute probability ratios given fixed $\theta$, i.e.,
\begin{equation}
\frac{\Pr\left(Y=y'\left|t,\theta\right.\right)}{\Pr\left(Y=y\left|t,\theta\right.\right)} = \exp\left(\theta^T\left(t\left(y'\right)-t\left(y\right)\right)\right) \label{eq_prat}
\end{equation}
for $y,y' \in \mathcal{Y}_n$.  This lends itself neatly to Markov chain Monte Carlo algorithms (such as Gibbs or Metropolis-Hastings samplers) which require that the target distribution be specified only up to a normalizing constant.  While useful in many settings, MCMC methods have the well-known disadvantage of being approximate sampling algorithms whose adequacy can be difficult to verify (see, e.g., \citet{gamerman:bk:1997,gelman:ch:1996}).  This is of particular concern in settings such as likelihood approximation \citep{geyer.thompson:jrssC:1992,hunter.handcock:jcgs:2006}, wherein poor-quality MCMC samples may in turn adversely affect estimation.  Since many intuitively attractive ERG models are degenerate or near-degenerate for large portions of their parameter spaces \citep{strauss:siam:1986,handcock:ch:2003,schweinberger:jasa:2011}, this is a potentially serious problem; indeed, the challenges of simulation and inference in near-degenerate settings have been a major concern of those implementing tools for practical use \citep{hunter.et.al:jss:2008}.  While advances in ERG parameterization have greatly extended the range of families for which degeneracy is less of a concern \citep{lusher.et.al:bk:2012}, simulation quality is still potentially important for applications such as high-precision likelihood calculations, generation of high-quality samples against which to check approximate simulation or inference methods \citep[e.g.][]{pu.et.al:nips:2012,butts:jms:2015}.

\section{Simulation Method}

As indicated above, our focus here is on the development of a general method for perfect (sometimes called ``exact'') sampling from fixed-order ERG distributions.  Our approach falls within the general family of methods known as ``Coupling From The Past'' \citep{propp.wilson:rsa:1996}, so-called because it involves the use of coupled Markov chains extended backwards through (virtual) time.  The base chain employed for this purpose is the well-known single (edge) update Gibbs sampler, a frequently used tool for approximate simulation of ERG models.  Although the base chain is non-monotone, coalescence detection is made possible by constructing a two chain bounding process whose elements ``sandwich'' the states of the base chain. The bounding approach employed here has previously been exploited for non-MCMC based approximate ERG sampling \citep{butts:jms:2015} and for the derivation of analytical bounds on ERG behavior \citep{butts:sm:2011b}.  In that it makes use of bounding processes that differ from the target Markov chain, our simulation method has some resemblance to dominated CFTP \citep{kendall:proc:1997,kendall.moller:aap:2000} (also called ``Coupling Into and From The Past''); since we employ the bounding processes only for coalescence detection, however, and not for coupling, our approach is actually more similar to ``classic'' CFTP than to dominated CFTP.  

Our presentation of the simulation method begins by reviewing the single-update Gibbs sampler for ERGs.  We then discuss the bounding processes employed to ``sandwich'' the states of the sampler, including the computation of change score bounds to facilitate implementation and the use of the bounding processes in coalescence detection.  This is followed by the presentation of a unified algorithm for the perfect sampling scheme.  

\subsection{Underlying Gibbs Sampler}

Our simulation method is built on a familiar sampling procedure for ERG families, the single-update Gibbs sampler (see, e.g., \citet{snijders:joss:2002}).  This procedure may be described as follows.  Define $\Delta_{ij}(y)=t\left(y^+_{ij}\right)-t\left(y^-_{ij}\right)$ to be the vector of ``change scores'' for $t$ on adjacency matrix $y$, given a perturbation of the $i,j$ edge.  We note that, for an ERG family with sufficient statistics $t$ and parameter vector $\theta$, $Y_{ij}$ is conditionally Bernoulli distributed with parameter
\begin{align}
\Pr\left(Y_{ij}=1 \left| y^c_{ij}, t, \theta\right.\right) &= \frac{1}{1+\exp\left(-\theta^T \Delta_{ij}\left(y\right)\right)}\\  \label{eq_gibbs_update}
&=\ilogit\left(\theta^T \Delta_{ij}\left(y\right)\right)
\end{align}
This is a direct consequence of Equation~\ref{eq_prat}.  Now, consider a sequence of matrices $Y^{(1)},Y^{(2)},\ldots$ formed by identifying a vertex pair $\{i,j\}$ (directed case: $(i,j)$) at each step, and letting $Y^{(i)}=\left(Y^{(i-1)}\right)^+_{ij}$ with probability given by Equation~\ref{eq_gibbs_update} and $Y^{(i)}=\left(Y^{(i-1)}\right)^-_{ij}$ otherwise.  Subject to fairly mild conditions on the choice of $\{i,j\}$ (e.g., all pairs chosen with positive probability within some bounded number of steps, and choice of pair independent of $Y$) and the finiteness of $\theta^T t(Y)$, $Y^{(1)},Y^{(2)},\ldots$ forms a Markov chain with equilibrium distribution given by Equation~\ref{eq_erg}.  Although convergence of this procedure may be slow (see \citet{snijders:joss:2002} of \citet{bhamidi.et.al:aap:2011} for a discussion), it is easily implemented and enjoys the substantial benefit that computation of the change scores (i.e., $\Delta$) can often be performed in constant or linear time.  For our purposes, this scheme is also useful because it allows for the specification of \emph{bounding processes} which allow for coalescence detection in the context of a CFTP algorithm.  It is to the definition of these processes that we now turn.

\subsection{Definition of the Bounding Processes}

Given a single-update Gibbs sampler as defined above, it is possible to define a pair of graph processes which stochastically bound the former process in terms of the subgraph relation.  Let $(L,U)$ be the ``lower'' and ``upper'' processes, respectively; our aim is to construct these processes in such a way as to ensure that $L^{(i)} \subseteq Y^{(i)} \subseteq U^{(i)}$ for all $i\ge 0$ and for all realizations of $Y$.  Let us assume that, for some given $i$, the condition $L^{(i)} \subseteq Y^{(i)} \subseteq U^{(i)}$ holds, and let $\mathcal{B}^{(i)}=\{y \in \mathcal{Y}_n: L^{(i)} \subseteq y \subseteq U^{(i)}\}$ be the set of adjacency matrices bounded by the upper and lower processes at iteration $i$.  The evolution of $(L,U)$ is governed by two vectors of change score functions, $\Delta^L$ and $\Delta^U$, with elements constructed from $\Delta$ for a given graph set $\mathcal{A}$ as follows:
\begin{gather}
\Delta^L_{ij}\left(\mathcal{A},\theta\right)_k = \begin{cases} \max_{y \in \mathcal{A}} \Delta_{ij}(y)_k& \theta_k \le 0\\ \min_{y \in \mathcal{A}} \Delta_{ij}(y)_k& \theta_k > 0\end{cases} \label{eq_deltal}\\
\Delta^U_{ij}\left(\mathcal{A},\theta\right)_k = \begin{cases} \min_{y \in \mathcal{A}} \Delta_{ij}(y)_k& \theta_k \le 0\\ \max_{y \in \mathcal{A}} \Delta_{ij}(y)_k& \theta_k > 0\end{cases}. \label{eq_deltau}
\end{gather}

As with the single-update Gibbs sampler, we assume that at the $i$th iteration some pair $j,k$ has been chosen for updating; further, we assume that we are given a sequence $u^{(0)}, u^{(1)}, \ldots$ of iid uniform random deviates on the $[0,1]$ interval.  The bounding processes then simultaneously evolve by the following updating mechanism:
\begin{gather}
L^{(i+1)} = \begin{cases}\left(L^{(i)}\right)^+_{jk} & u^{(i)} \le \ilogit\left(\theta^T \Delta^L_{jk}\left(\mathcal{B}^{(i)},\theta\right) \right) \\ \left(L^{(i)}\right)^-_{jk} & u^{(i)} > \ilogit\left(\theta^T \Delta^L_{jk}\left(\mathcal{B}^{(i)},\theta\right) \right) \end{cases}\\
U^{(i+1)} = \begin{cases}\left(U^{(i)}\right)^+_{jk} & u^{(i)} \le \ilogit\left(\theta^T \Delta^U_{jk}\left(\mathcal{B}^{(i)},\theta\right) \right)\\ \left(U^{(i)}\right)^-_{jk} & u^{(i)} > \ilogit\left(\theta^T \Delta^U_{jk}\left(\mathcal{B}^{(i)},\theta\right) \right)\end{cases}. \label{eq_upprobu}
\end{gather}
Observe that, under this updating rule, the probability of setting $U^{(i+1)}_{jk}=1$ is greater than or equal to the probability of setting $Y^{(i+1)}_{jk}=1$ (since $\Delta^U$ is constructed so as to strictly favor edge addition).  Thus, if $Y^{(i)}\subseteq U^{(i)}$, then $Y^{(i+1)}\subseteq U^{(i+1)}$ (assuming that all states are updated using the same ``random coins,'' $u$).  Likewise, the probability of setting $L^{(i+1)}_{jk}=0$ is greater than or equal to the probability of setting $Y^{(i+1)}_{jk}=0$, and thus if $L^{(i)}\subseteq Y^{(i)}$, then $L^{(i+1)}\subseteq Y^{(i+1)}$.  We can guarantee that the initial condition holds for both chains by setting $L^{(0)}=N_n$ and $U^{(0)}=K_n$ (the lower and upper bounds on $\mathcal{Y}_n$, respectively).  By induction, it then follows that $L^{(i)}\subseteq Y^{(i)}\subseteq U^{(i)}$ for all $i>0$, and all $Y$.

\subsubsection{Bounding the Change Scores}

In the above construction, calculation of $\Delta^L$ and $\Delta^U$ is obviously an important consideration: if we must examine every element of $\mathcal{B}$ for this purpose, then simulation of the bounding processes will be impractical. (Recall that, in the initial condition, the set of bounded graphs is equal to $\mathcal{Y}_n$.)  Thankfully, such enumeration is typically unnecessary.  In particular, let us assume that $t$ is such that $t_i\left(Y\right)\le t_i\left(Y'\right)$ for all $Y \subseteq Y'$ (i.e., the elements of $t$ are weakly monotone increasing in edge addition).  In this case, $\Delta_{jk}$ clearly cannot be greater than the difference between $t$ evaluated on $U^+_{jk}$ and $t$ evaluated on $L^-_{jk}$; since edge addition can only increase $t$, it also follows that $\Delta_{jk}$ is nonnegative.  It follows therefore that $\max_{y \in \mathcal{B}^{(i)}} \Delta_{jk}(y) \le t\left(U^+_{jk}\right)-t\left(L^-_{jk}\right)$, and $\min_{y \in \mathcal{B}^{(i)}} \Delta_{jk}(y) \ge 0$.  Substituting these bounds for those used in Equations~\ref{eq_deltal}--\ref{eq_deltau} preserves the dominance properties of the bounding processes, and requires only the evaluation on change scores on two graphs (as opposed to the entire bounded set).

Further refinement is possible when $t$ is such that $\Delta$ itself is at least weakly monotone increasing in edge addition.  In this case, it is trivially true that $\max_{y \in \mathcal{B}^{(i)}} \Delta_{jk}(y) \le t\left(U^+_{jk}\right)-t\left(U^-_{jk}\right)$ and $\min_{y \in \mathcal{B}^{(i)}} \Delta_{jk}(y) \ge t\left(L^+_{jk}\right)-t\left(L^-_{jk}\right)$.  Since at least one member of $\mathcal{B}^{(i)}$ exhibits each of these values ($U$ and $L$, specifically), these bounds are the tightest possible.  As in the above case, substituting these bounds in the definition of $\Delta^L$ and $\Delta^U$ allows for the $L$ and $U$ to be updated without the necessity of calculating $t$ for all members of $\mathcal{B}^{(i)}$.

It should be noted that the latter case is of particular interest, since it encompasses all \emph{subgraph census statistics} (i.e., statistics which consist of the number of copies of a given isomorphism class within $y$).  This can be understood as follows: let $H$ be an isomorphism class (to be counted), and let $\mathcal{H}_{ij}$ be the set of ``edge-missing preconditions'' for $H$ -- that is, the set of subgraphs $H'$ which are isomorphic to $H$ given the addition of the $i,j$ edge.  Let $t$ be a subgraph census statistic counting copies of $H$.  Then $\Delta_{ij}(y)$ is clearly equal to the number of copies of all $H' \in \mathcal{H}_{ij}$ belonging to $y^-_{ij}$.  Since adding non-$ij$ edges to $y$ cannot decrease the number of pre-condition subgraphs, it follows that $\Delta_{ij}(y)\le\Delta_{ij}(y')$ for all $y \subseteq y'$.  Subgraph census statistics arise naturally from the Hammersley-Clifford Theorem \citep{besag:jrssB:1974} when homogeneity constraints are applied to statistics indicating members of the same isomorphism class; they are used extensively in the modeling of social networks (see, e.g., \citet{holland.leinhardt:jasa:1981,frank.strauss:jasa:1986,wasserman.pattison:p:1996,pattison.wasserman:bjmsp:1999,pattison.robins:sm:2002}).  We shall consider a specific example in Section~\ref{sec_mgbounds}, when we apply these results to the case of the Markov graphs.

\subsection{Coalescence Detection}

Let $\ldots,Y^{(-1)},Y^{(0)},Y^{(1)},\ldots$ be the states of a Markov chain resulting from a single-update Gibbs sampler as described above, and for some $i>0$ let $L^{(-i)}=N_n, U^{(-i)}=K_n$ (where $(L,U)$ are the bounding processes associated with $Y$).  Suppose that, in the joint evolution of $(L,Y,U)$, there exists some time $-j$ such that $-i\le -j \le 0$ and $L^{(-j)}=U^{(-j)}$.  By construction, $L^{(-j)} \subseteq Y^{(-j)} \subseteq U^{(-j)}$, and hence $Y^{(-j)}=L^{(-j)}=U^{(-j)}$.  Moreover, since $L^{(-i)} \subseteq Y^{(-i)} \subseteq U^{(-i)}$ for all possible $Y^{(-i)}$, it follows that all past sequences $\ldots,Y^{(-i-1)},Y^{-i}$ lead to $Y^{(-j)}$; by extension, $Y^{(0)}$ must be a draw from the infinite history of $Y$.\footnote{Note that we cannot simply take $Y^{(-j)}$, since the coalescence point is not an independent draw from the equilibrium distribution of $Y$.  Fixing the sampling time in advance resolves this difficulty.}  $Y$, however, is by construction a Gibbs sampler with unique equilibrium distribution corresponding to Equation~\ref{eq_erg}.  \emph{Thus, $Y^{(0)}$ is distributed as an ERG with statistics $t$ and parameter vector $\theta$.}

This phenomenon -- by which all trajectories of a Markov chain beyond a given point converge to a single state -- is known as \emph{coalescence} \citep{propp.wilson:rsa:1996}.  Our use of the bounding chains, then, is a \emph{coalescence detection} scheme for $Y$; observing the event $L^{(-j)}=U^{(-j)}$ tells us that $Y$ has coalesced, without requiring explicit computation of all possible chains from $Y^{(-i)}$ to $Y^{(0)}$.  Of course, there is no guarantee that, for a given $i$ (and associated sequence of updates), $Y$ will have coalesced by time 0.  In this case, however, one can recede further into the past, and try again.  Once one finds a case for which coalescence has been detected, one can take the resulting value of $Y^{(0)}$ as a draw from the target distribution.

\subsubsection{Time to Coalescence}

Per the above, if $U$ and $L$ ever coincide, then coalescence has occurred.  What can be said regarding the time to coalescence?  The key result is expressed in the following theorem.

\begin{theorem} \label{thm_coal}
Let $L, U$ be bounding processes for $Y$ having finite $\theta^T t(y)$ on $y\in \mathcal{Y}_n$, with an underlying Gibbs sampler that updates all edge variables within every $s$ iterations.  Let $C_i$ be an indicator for the event that $L^{(j)}=U^{(j)}$ for some $j\le i$.  Then (i) $\Pr(C_i=1)\to 1$ as $i \to \infty$ and (ii) there exists some $\epsilon>0$ such that $\Pr(C_{ks}=1)\ge 1-(1-\epsilon^s)^k$ for $k\in 1,2,\ldots$.
\end{theorem}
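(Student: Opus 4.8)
The plan is to establish both claims by analyzing the coalescence event as a recurring opportunity that occurs with bounded-below probability over each block of $s$ iterations. The key structural observation is that the bounding processes $(L,U)$ satisfy $L^{(i)} \subseteq Y^{(i)} \subseteq U^{(i)}$, so coalescence ($L=U$) forces the entire bounded set $\mathcal{B}^{(i)}$ to collapse to a single matrix. I would prove part (ii) first, since part (i) follows from it: if I can show $\Pr(C_{ks}=1) \ge 1 - (1-\epsilon^s)^k$ for some fixed $\epsilon > 0$, then letting $k \to \infty$ (so $i = ks \to \infty$) drives the bound to $1$, and monotonicity of $C_i$ in $i$ (coalescence, once detected, persists since $L=U$ implies $|\mathcal{B}|=1$ and the coupled update keeps them equal thereafter) extends the limit to all $i$.

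For part (ii), the central idea is to exhibit a single edge-update sequence over one block of $s$ iterations that \emph{forces} coalescence regardless of the current states of $L$ and $U$, and to lower-bound its probability. First I would fix a block of $s$ consecutive iterations during which, by hypothesis, every edge variable is updated at least once. Consider the event that, at each such update of pair $\{j,k\}$, the shared uniform deviate $u^{(i)}$ falls below \emph{both} $\ilogit(\theta^T \Delta^L_{jk})$ and $\ilogit(\theta^T \Delta^U_{jk})$ — equivalently below the smaller of the two, which is the $L$-probability since $\Delta^L$ is constructed to disfavor edge addition relative to $\Delta^U$. On this event both processes set the edge to $1$, so $L_{jk}^{(i+1)} = U_{jk}^{(i+1)} = 1$; once an edge is set equal in $L$ and $U$ it remains equal under the coupled dynamics, so after all $s$ updates every edge agrees and $L = U$. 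The probability of the favorable coin at each of the $s$ steps is at least $\ilogit(\theta^T \Delta^L_{jk}(\mathcal{B},\theta))$.

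The main obstacle is showing this per-step probability is bounded below by a single positive constant $\epsilon$ uniformly over all reachable configurations of $(L,U)$. Here I would invoke the hypothesis that $\theta^T t(y)$ is finite on all of $\mathcal{Y}_n$, together with finiteness of $\mathcal{Y}_n$: the change scores $\Delta^L_{jk}(\mathcal{B},\theta)$ range over a finite set of possible values as $\mathcal{B}$ varies over the finitely many bounded sets, and $\ilogit$ maps any finite argument strictly inside $(0,1)$. Hence $\epsilon := \min_{j,k,\mathcal{B}} \ilogit(\theta^T \Delta^L_{jk}(\mathcal{B},\theta)) > 0$ is a strictly positive minimum over a finite collection. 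The forcing event over one block then has probability at least $\epsilon^s$ (independence of the $s$ distinct uniform deviates $u^{(i)}$). Since coalescence in any one of the $k$ disjoint blocks suffices, and the blocks use independent coins, $\Pr(\text{no coalescence in } k \text{ blocks}) \le (1-\epsilon^s)^k$, giving $\Pr(C_{ks}=1) \ge 1 - (1-\epsilon^s)^k$ as required.

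I should remark that one could equally force all edges to $0$ by requiring each coin to exceed $\ilogit(\theta^T \Delta^U_{jk})$; whichever target state is chosen, the argument only needs \emph{some} forcing event of positive probability per block, and the finiteness of $\mathcal{Y}_n$ guarantees a uniform positive $\epsilon$ exists.
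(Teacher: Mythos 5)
Your proposal is correct and takes essentially the same route as the paper's own proof: a uniform positive lower bound $\epsilon$ on the per-update probability that the shared coin makes both bounding chains move identically (guaranteed by finiteness of $\theta^T t$ over the finite set $\mathcal{Y}_n$), chained over a block of $s$ updates that covers every edge variable to give $\epsilon^s$, and then a geometric bound over $k$ blocks. One small caution: your clause ``once an edge is set equal in $L$ and $U$ it remains equal under the coupled dynamics'' is false in general (while $L \neq U$ the two thresholds differ, so a re-update whose coin lands between them splits the edge apart again), but this is harmless in your argument because on your forcing event every update in the block --- including any re-update --- sets the edge to $1$ in both chains; the paper avoids the issue by working with the per-step event that the update is merely ``equalizing'' (both chains add, or both delete, the edge).
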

\begin{proof}
Our proof proceeds as follows.  First, we observe that each edge update has a positive probability of setting an element of $L$ and $U$ equal to each other.  We then note that there exists a positive probability that a sequence $s$ of such updates will set all elements of $L$ equal to all elements of $U$.  The coalescence time results then follow.

To begin, assume we are at the $i$th iteration of the process, with the pair $j,k$ selected for updating.  From Equations~\ref{eq_deltal}-\ref{eq_upprobu}, it follows that the probability of the transition pair $L^{(i+1)}=\left(L^{(i)}\right)^+_{jk}, U^{(i+1)}=\left(U^{(i)}\right)^+_{jk}$ is given by $\ilogit\left(\theta^T \Delta^L_{jk}\left(\mathcal{B}^{(i)},\theta\right) \right) \ge \min_{jk} \ilogit\left(\theta^T \Delta^L_{jk}\left(\mathcal{Y}_n,\theta\right) \right)$.  Likewise, it also follows that the probability of the transition pair $L^{(i+1)}=\left(L^{(i)}\right)^-_{jk}, U^{(i+1)}=\left(U^{(i)}\right)^-_{jk}$ is given by $\ilogit\left(-\theta^T \Delta^U_{jk}\left(\mathcal{B}^{(i)},\theta\right) \right) \ge \min_{jk} \ilogit\left(-\theta^T \Delta^U_{jk}\left(\mathcal{Y}n,\theta\right) \right)$.  By the finiteness of $\theta$ and $t(y)$, it follows that there exists some $\epsilon>0$ such that $\min_{jk} \ilogit\left(\theta^T \Delta^L_{jk}\left(\mathcal{Y}_n,\theta\right) \right) > \epsilon$ and $\min_{jk} \ilogit\left(-\theta^T \Delta^U_{jk}\left(\mathcal{Y}n,\theta\right) \right) > \epsilon$.  $\epsilon$ is then a lower bound on the probability that $\left(L^{(i+1)}\right)_{jk}=\left(U^{(i+1)}\right)_{jk}$.

Next, assume once more that we are at the $i$th iteration of the process, and consider the next $s$ iterations.  From the above, each iteration sets the updated edge variable in $L$ equal to the corresponding variable in $U$ with probability $\ge \epsilon$.  By assumption, the sampler visits every edge variable within every $s$ iterations. Since obtaining an ``equalizing'' step at each such iteration will necessarily set all elements of $L$ equal to those of $U$, it follows that $\Pr(L^{(i+s)}=U^{(i+s)})\ge \epsilon^s$.

We now observe that, since the above results hold irrespective of the states of $L$ and $U$, the probability that coalescence will occur within $k$ blocks of $s$ iterations is greater than or equal to $1-(1-\epsilon^s)^k$.  Since $\epsilon>0$, it follows that the probability of coalescence approaches 1 as $k\to \infty$. 
\end{proof}

The essential intuition of Theorem~\ref{thm_coal} is that there is always some positive probability that $U$ and $L$ will draw closer to each other, and hence they eventually coincide with probability 1.  Moreover, the probability of coalescence increases exponentially fast in the number of iterations.  While this result does not guarantee that the expected time to coalescence will be small in practical terms (this depends on the transition probabilities), it does guarantee the existence of a time scale on which coalescence can be made arbitrarily likely.  It should be noted that where $\epsilon$ and $s$ satisfying the required conditions can be determined for a particular model, the second statement of the theorem allows for stronger statements to be made (e.g., upper bounds on the median or expected coalescence time). 

As a final note, it should be observed that while the coincidence of $U$ and $L$ is a sufficient condition for the coalescence of $Y$, it may not be a \emph{necessary} condition; if not, there may be other procedures (perhaps more efficient) that can also detect coalescence.  Although this question is not pursued further here, it is an intriguing possibility for future work in this area.

\subsection{Algorithm}

Putting all this together, Algorithm~\ref{alg_erg} shows a sample procedure for the use of exact sampling to generate ERG draws.  The approach taken is typical of CFTP algorithms (see, e.g. \citep{propp.wilson:rsa:1996}), combining forward evolution of the Markov chains with a geometric ``backing off'' procedure where coalescence is not obtained.  The initial chain depth is set to $\tbinom{n}{2}$ (line~\ref{alg_line_initi}), since at least this many updates are required for coalescence detection.  The random inputs to this algorithm are the ``coins,'' $u$, and the edges to update (stored as row/column pairs $r,c$); these are initialized in lines~\ref{alg_line_initrstart}--\ref{alg_line_initrstop}, via uniform draws from the appropriate distributions.  The main loop of the procedure (lines~\ref{alg_line_mainstart}--\ref{alg_line_mainstop}) initializes the bounding chains, runs them forward in time, and (if coalescence is not detected by time 0) backs off by a factor of two.  Once coalescence is detected (line~\ref{alg_line_coaldet}), $Y$ is set equal to the current bounding chain state and further updates are made to $Y$ rather than $L$ and $U$ (lines~\ref{alg_line_yupdstart}--\ref{alg_line_yupdstop}).  The value of the coalesced $Y$ at time 0 is then returned.

\dontprintsemicolon
\begin{algorithm} \scriptsize
\caption{Exact Sampling Procedure for Undirected ERGs \label{alg_erg}}
\KwData{$t$, $\theta$, $n$}
\KwResult{A single draw from $\mathrm{ERG}(t,\theta)$}
Let \textsc{Coalesced}$:=$\textsc{False}\;
Let $i:=\tbinom{n}{2}$\; \label{alg_line_initi}
Draw $u^{(-i)},\ldots,u^{(-1)} \sim \mathrm{Unif}(0,1)$\; \label{alg_line_initrstart}
Draw $r^{(-i)},\ldots,r^{(-1)}\sim \mathrm{Unif}(\{1,\ldots,n\})$\;
Draw $c^{(-i)},\ldots,c^{(-1)}\sim \mathrm{Unif}(\{1,\ldots,r-1\})$\; \label{alg_line_initrstop} 
\While{$\neg$\textsc{Coalesced}}{ \label{alg_line_mainstart}
  Let $L^{(-i)}:=N_n$\;
  Let $U^{(-i)}:=K_n$\;
  \For(\emph{(Evolve chains forward in time)}){$j\in -i,\ldots,-1$}{
    \eIf{$\neg$\textsc{Coalesced}}{
      \eIf(\emph{(Update $L$)}){$u^{(j)}\le \logit\left(\theta^T \Delta^L_{r^{(j)}c^{(j)}}\left(\mathcal{B}^{(j)},\theta\right)\right)$}{
        Let $L^{(j+1)}:=\left(L^{(j)}\right)^+_{r^{(j)}c^{(j)}}$\;
      }{
        Let $L^{(j+1)}:=\left(L^{(j)}\right)^-_{r^{(j)}c^{(j)}}$\;
      }
      \eIf(\emph{(Update $U$)}){$u^{(j)}\le \logit\left(\theta^T \Delta^U_{r^{(j)}c^{(j)}}\left(\mathcal{B}^{(j)},\theta\right)\right)$}{
        Let $U^{(j+1)}:=\left(U^{(j)}\right)^+_{r^{(j)}c^{(j)}}$\;
      }{
        Let $U^{(j+1)}:=\left(U^{(j)}\right)^-_{r^{(j)}c^{(j)}}$\;
      }
      \If(\emph{(Check for coalescence)}){$L^{(j+1)}=U^{(j+1)}$}{ \label{alg_line_coaldet}
        Let $Y^{(j+1)}:=L^{(j+1)}$\;
        Let \textsc{Coalesced}$:=$\textsc{True}\;
      }
    }{
      \eIf(\emph{(Update $Y$)}){$u^{(j)}\le \logit\left(\theta^T \Delta_{r^{(j)}c^{(j)}}\left(Y^{(j)}\right)\right)$}{ \label{alg_line_yupdstart}
        Let $Y^{(j+1)}:=\left(Y^{(j)}\right)^+_{r^{(j)}c^{(j)}}$\;
      }{
        Let $Y^{(j+1)}:=\left(Y^{(j)}\right)^-_{r^{(j)}c^{(j)}}$\;
      } \label{alg_line_yupdstop}
    }  
  }
  \If(\emph{(Recede farther into the past, if needed)}){$\neg$\textsc{Coalesced}}{
    \For{$j \in 1,\ldots,i$}{
      Draw $u^{(-i-j)} \sim \mathrm{Unif}(0,1)$\;
      Draw $r^{(-i-j)} \sim \mathrm{Unif}(\{1,\ldots,n\})$\;
      Draw $c^{(-i-j)} \sim \mathrm{Unif}(\{1,\ldots,r-1\})$\;
    }
    Let $i:=2i$\;
  }
} \label{alg_line_mainstop}
\Return{$Y^{(0)}$}
\end{algorithm}

Although exact running time will obviously vary with the implementation of $\Delta$ (and the mixing properties of the underlying chain), the need for $L$ and $U$ to meet ensures that Algorithm~\ref{alg_erg} is at least order $n^2$.  As such, it may be impractical for extremely large networks (e.g., those with tens of thousands of nodes).  On the other hand, such scaling is not prohibitive for networks of the size typically encountered in organizational or group settings.  Due to the geometric backing-off procedure, the coalescence point is guaranteed to be found in $\log_2 T$ iterations of the main loop, where $-T$ is the coalescence time; similarly, no more than $T$ ``excess'' updating steps are employed in the final iteration.  Efficient implementation of Algorithm~\ref{alg_erg} depends critically on the change score computation, which should in practice be optimized to the extent feasible.  For a discussion of this and related issues, see \citet{hunter.et.al:jss:2008}.

\section{Application to the Markov Graphs}

First introduced by \citet{frank.strauss:jasa:1986}, the Markov graphs constitute one of the most basic ERG families with complex dependence.  Specifically, the Markov graphs are formed by the set of distributions for which the states of two potential edges, $\{i,j\},\{k,l\}$, are conditionally dependent only if they have an endpoint in common (i.e., $\{i,j\}\cap\{k,l\} \neq \varnothing$).  This condition can be a viewed as a graph-theoretic version of parallel developments in spatial statistics (e.g., \citet{besag:ts:1975}), in which states associated with particular locations are conditionally dependent only if those locations share a border (or other equivalent notion of contact).  As Frank and Strauss demonstrate, the sufficient statistics for the (homogeneous) Markov graphs in the undirected case are the counts of $k$-stars (i.e., copies of $K_{1,k}$) and triangles (i.e., copies of $K_3$).  Since the $k$-star census has a one-to-one relationship with the degree distribution, the undirected Markov graphs may be equivalently parameterized in terms of the degree distribution together with the triangle count.  This is an intuitive property in the context of applications such as social networks, which frequently exhibit both skewed degree distributions and local clustering \citep{davis:asr:1970,holland.leinhardt:ajs:1972,snijders:sn:1981}.  While the homogeneous Markov graphs are prone to degenerate behavior that makes them impractical in most empirical settings \citep{schweinberger:jasa:2011}, they are often useful as building blocks for other model families \citep[see e.g.][]{schweinberger.handcock:jrssB:2015} and they continue to be important objects of theoretical study.  Unfortunately, the dependency properties of the Markov graphs make direct simulation infeasible, and existing applications rely upon MCMC methods for approximate sampling from this family.  Here, we apply our method to the problem of exact sampling from the Markov graphs.  Although we will limit ourselves to the undirected case, the approach generalizes fairly straightforwardly to the directed case as well.

\subsection{Bounds on the Markov Graph Change Scores \label{sec_mgbounds}}

To apply our sampling scheme to the Markov graphs, we must derive the change score bounds $\Delta^L$ and $\Delta^U$.  Since the Markov graph statistics are all subgraph census statistics, these bounds can be expressed directly in terms of the current states of $L$ and $U$ as follows:
\begin{gather}
\Delta^L_{jk}\left(\mathcal{B}^{(i)},\theta\right)_l = \begin{cases} \Delta_{jk}\left(U^{(i)}\right)_l& \theta_l \le 0\\ \Delta_{jk}\left(L^{(i)}\right)_l& \theta_l > 0\end{cases} \label{eq_deltalmg}\\
\Delta^U_{jk}\left(\mathcal{B}^{(i)},\theta\right)_l = \begin{cases} \Delta_{jk}\left(L^{(i)}\right)_l& \theta_l \le 0\\ \Delta_{jk}\left(U^{(i)}\right)_l& \theta_l > 0\end{cases}. \label{eq_deltaumg}
\end{gather}
It then remains to compute $\Delta$.  Change scores for the Markov graph statistics are well-known in the network field (and implemented in packages such as \texttt{ergm} \citep{hunter.et.al:jss:2008}), but for completeness we review them here.  

\subsubsection{$k$-stars}

As noted above, the $k$-star statistic of graph $G$ is the number of copies of $K_{1,k}$ within $G$.  $k$ may take any value from 1 to $n-1$, with the former being simply the edges of $G$ and the latter $G$'s spanning stars.  Let $d_i(y)=\sum_{j=1}^n y_{ij}$ be the \emph{degree} of the $i$th vertex in $G$; then the number of $k$-stars in $G$ is equal to $t_k(y) = \sum_{i=1}^{n} \tbinom{d_i(y)}{k}$.  It follows, then, that the change score for the $k$th star statistic associated with the $\{i,j\}$ edge must be
\begin{align}
\Delta_{ij}(y)_k &= t_k\left(y^+_{ij}\right)-t_k\left(y^-_{ij}\right)\\
&= \binom{d_i(y^+_{ij})}{k}-\binom{d_i(y^-_{ij})}{k} + \binom{d_j(y^+_{ij})}{k}-\binom{d_j(y^-_{ij})}{k}\\
&= \binom{d_i(y^-_{ij})+1}{k}-\binom{d_i(y^-_{ij})}{k} + \binom{d_j(y^-_{ij})+1}{k}-\binom{d_j(y^-_{ij})}{k}\\
&=\binom{d_i(y^-_{ij})}{k-1}+\binom{d_j(y^-_{ij})}{k-1}.
\end{align}
The $k$-star change scores are hence simple functions of the (perturbed) degree distribution.  In practice, this is generally implemented by tracking degrees over time, which avoids the cost of computing $d$ at each update.  In such implementations, $\Delta$ for a $k$-star statistic can be calculated in constant time.

\subsubsection{Triangles}

The triangle statistic of graph $G$ is the number of copies of $K_3$ within $G$.  For an $\{i,j\}$ edge, the number of triangles potentially contributed is equal to the number of two-paths from $i$ to $j$ -- that is, the number of vertices $k \neq i,j$ such that $\{i,k\},\{k,j\}$ are in $G$.  If the $\{i,j\}$ edge is not present, none of these triangles exist; if the edge is present, all of them do.  It follows, then, that the change score for the triangle statistic is simply $\Delta_{ij}(y) = \sum_{k\neq i,j} y_{ik}y_{kj}$.  Although this is a linear-time update, it can be improved in practice via a sparse-matrix implementation which searches $i$ and $j$ for common neighbors.  Average running time in this case is dominated by the mean degree of $G$, which is often much smaller than $n$ for large networks.

\subsection{Numerical Example}

To illustrate the application of our CFTP algorithm to the Markov graphs, we show the results of a simple simulation study using two well-known two-parameter subfamilies.  The first such family is the ``edge clustering'' or two-star model, which consists of the subfamily of Markov graphs parameterized by the one-star (i.e., edge count) and two-star statistics.  Since the first two $k$-star statistics jointly characterize the mean and variance of the degree distribution, this family can be correctly described as the maximum entropy graph distribution obtained by fixing the first two moments of the degree distribution (and nothing else).  Alternately, it can also be thought of as a model in which edges may have a propensity to ``cluster'' around the same vertices (rather than to be scattered at random throughout the graph).  The second subfamily treated here is the ``triangle'' model, which is composed of the Markov graphs parameterized by the one-star and triangle ($K_3$) statistics.  This is arguably the simplest model of structural clustering (in the sense of completed two-paths), a property known to be frequent in social networks since at least the meta-analyses of \citet{davis:asr:1970} and \citet{holland.leinhardt:ajs:1972}.  While neither of these models is typically plausible from a substantive standpoint -- both are excessively homogeneous and prone to degeneracy -- they have played an important theoretical role as ``toy'' models for the exploration of edge dependence \citep[see, e.g.][]{jonasson:jap:1999,haggstrom.jonasson:jap:1999,handcock:ch:2003,burda.et.al:prevE:2004,park.newman:prevE:2004}.  Their use here also allows for comparison with other studies, e.g. \citet{handcock:ch:2003}.

For our illustrative simulation, we employed Algorithm~\ref{alg_erg} to take draws from each of the two submodel families.  In both cases, $\theta_1$ (edge count) was varied from -7.5 to 7.5, with the second free parameter ($\theta_2$ for two-stars, $\theta_3$ for triangles) varied from -0.5 to 0.75; this was done evenly in a 51 by 51 grid of parameter values.  To facilitate comparison with \citet{handcock:ch:2003} (whose study examined the two-star model using complete enumeration), $n$ was fixed to 7 throughout.  In each cell, 500 draws were taken from each of the corresponding models via the perfect sampling algorithm, and various descriptives were computed.  The results of these simulations are summarized in Figures~\ref{f_degandtime}--\ref{f_2stri_trimod}.  

Figure~\ref{f_degandtime} provides an indicator of degeneracy properties of each model (left panels), as well as information on algorithm performance (specifically, the log of the average number of iterations required for coalescence detection).  To assess degeneracy, we examined the total probability of drawing a complete or empty graph ($K_7$ or $N_7$) as a function of model parameters -- although models can exhibit other forms of degeneracy, collapse of the probability distribution into a mixture of complete and empty graphs is a phenomenon of particular interest for these models.  The top left panel of Figure~\ref{f_degandtime} shows the characteristic ``wedge'' pattern of degeneracy identified by \citet{handcock:ch:2003}, with parameters outside of a linearly bounded, triangular region leading to degenerate or near-degenerate mixtures of complete and empty graphs.   A very similar pattern is obtained under the triangle model (lower left panel), although the ``wedge'' is steeper relative to $\theta_1$ than in the two-star model.  This change reflects differences in the trade-off between density and the count of triangles and two-stars (respectively): because the triangle count can vary more readily at constant density, we see greater sensitivity (in terms of convergence to complete or empty graphs) to $\theta_2$ than to equivalent changes in $\theta_3$.  Nevertheless, both subfamilies lead to qualitatively similar regions of non-degeneracy over this portion of the parameter space, and neither is especially well-behaved in this regard.

\begin{figure}
\begin{center}
\includegraphics[width=6in,height=6in]{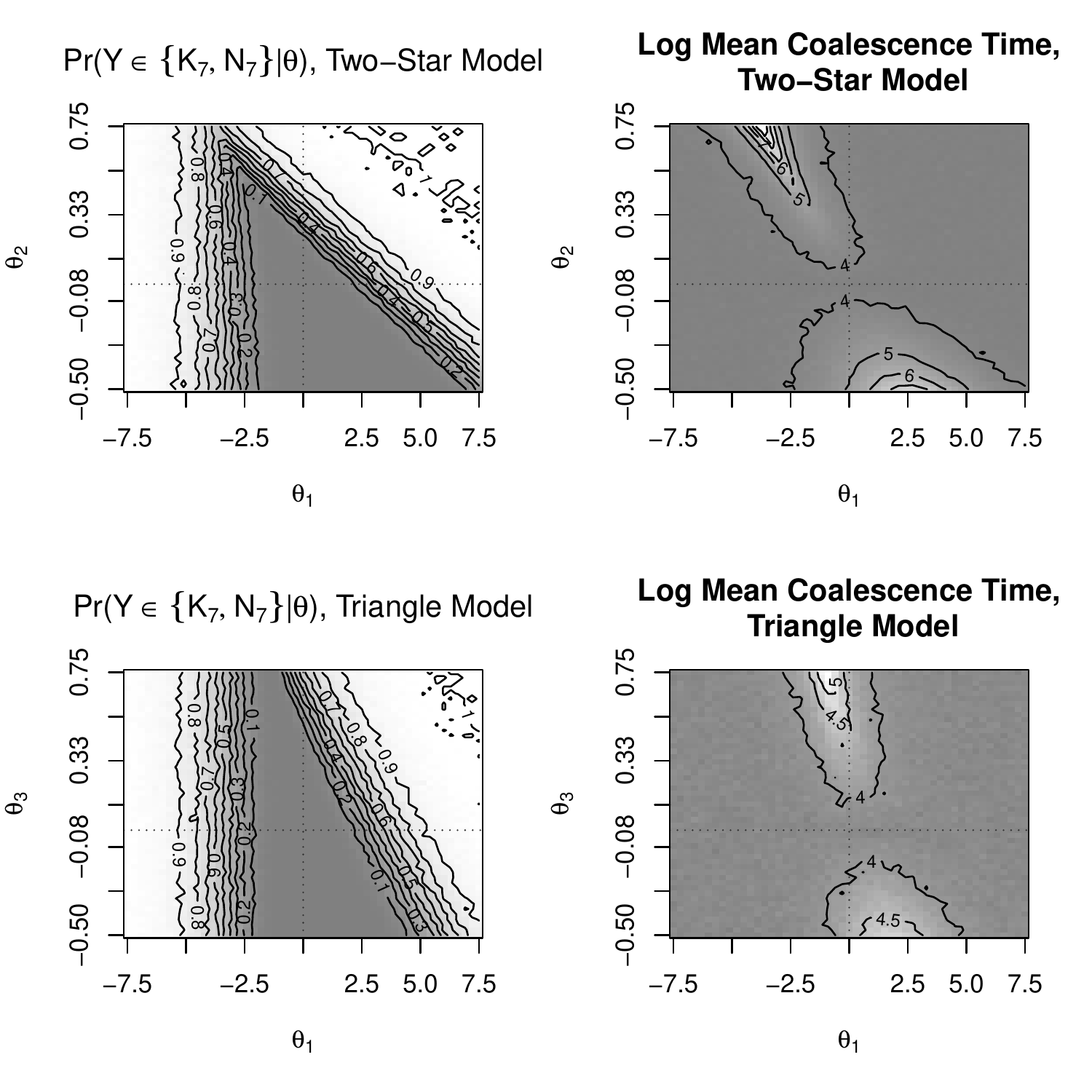}
\caption{\label{f_degandtime} Probability of Drawing Null or Complete Graphs (left) and Log Mean Coalescence Time (right), Two-Star and Triangle Models}
\end{center}
\end{figure}

While one might intuitively suppose that degenerate models would lead to performance problems for the sampling algorithm, this is not necessarily the case.  Although samples generated by our scheme are guaranteed to be from the equilibrium distribution of the model (subject to the usual caveats of pseudo-random number generation), the time needed to generate those samples is dependent upon the mixing properties of the underlying Gibbs sampler: where the sampler mixes poorly, time to coalescence may be extremely long.  In this respect, the right-hand panels of Figure~\ref{f_degandtime} provide news of a mostly salubrious nature.  Mean coalescence time for both the two-star and triangle models is short for the bulk of the parameter space, including most of the degenerate region.  The short coalescence time in the latter case is governed by the uniformity of attraction towards a single degenerate state; most chains for degenerate models quickly collapse into either the complete or empty graph, a process which does not impede performance.  Figures~\ref{f_2stri_2smod} and \ref{f_2stri_trimod} illustrate this dichotomy via the means and standard deviations of two-star and triangle counts for each model.  As the figures show, the bulk of the degenerate region for each model is composed of draws with no two-stars or triangles (in practice, mostly empty graphs), or draws with the maximum number of two-stars and triangles (complete graphs).  Outside of a narrow region of complete/empty mixtures (to which we will turn presently), most models outside the ``wedge'' lead to pure complete or empty distributions and are easy to simulate.

\begin{figure}
\begin{center}
\rotatebox{0}{\resizebox{6in}{6in}{\includegraphics{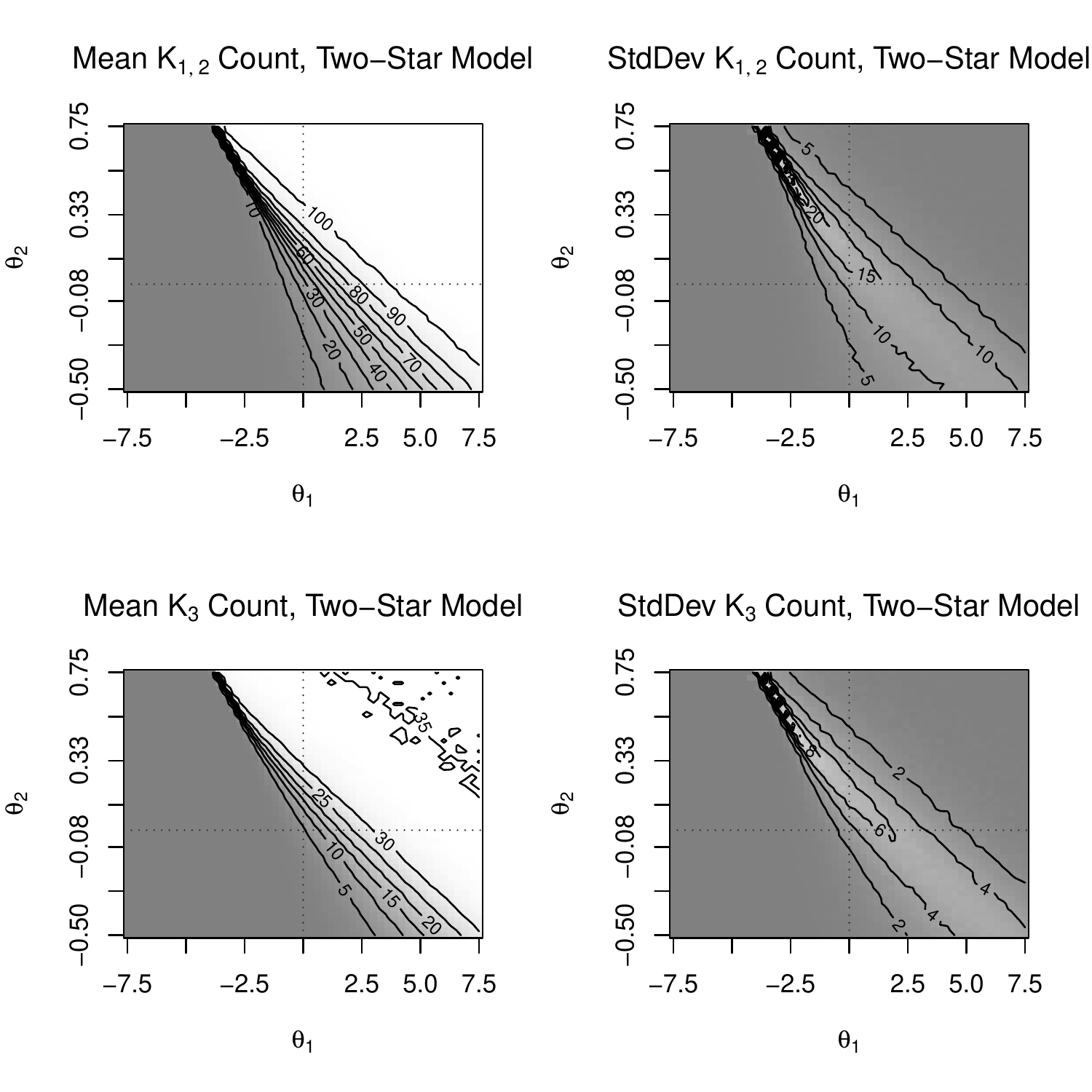}}}
\caption{\label{f_2stri_2smod} Two-Star and Triangle Statistics, Two-Star Model}
\end{center}
\end{figure}

\begin{figure}
\begin{center}
\rotatebox{0}{\resizebox{6in}{6in}{\includegraphics{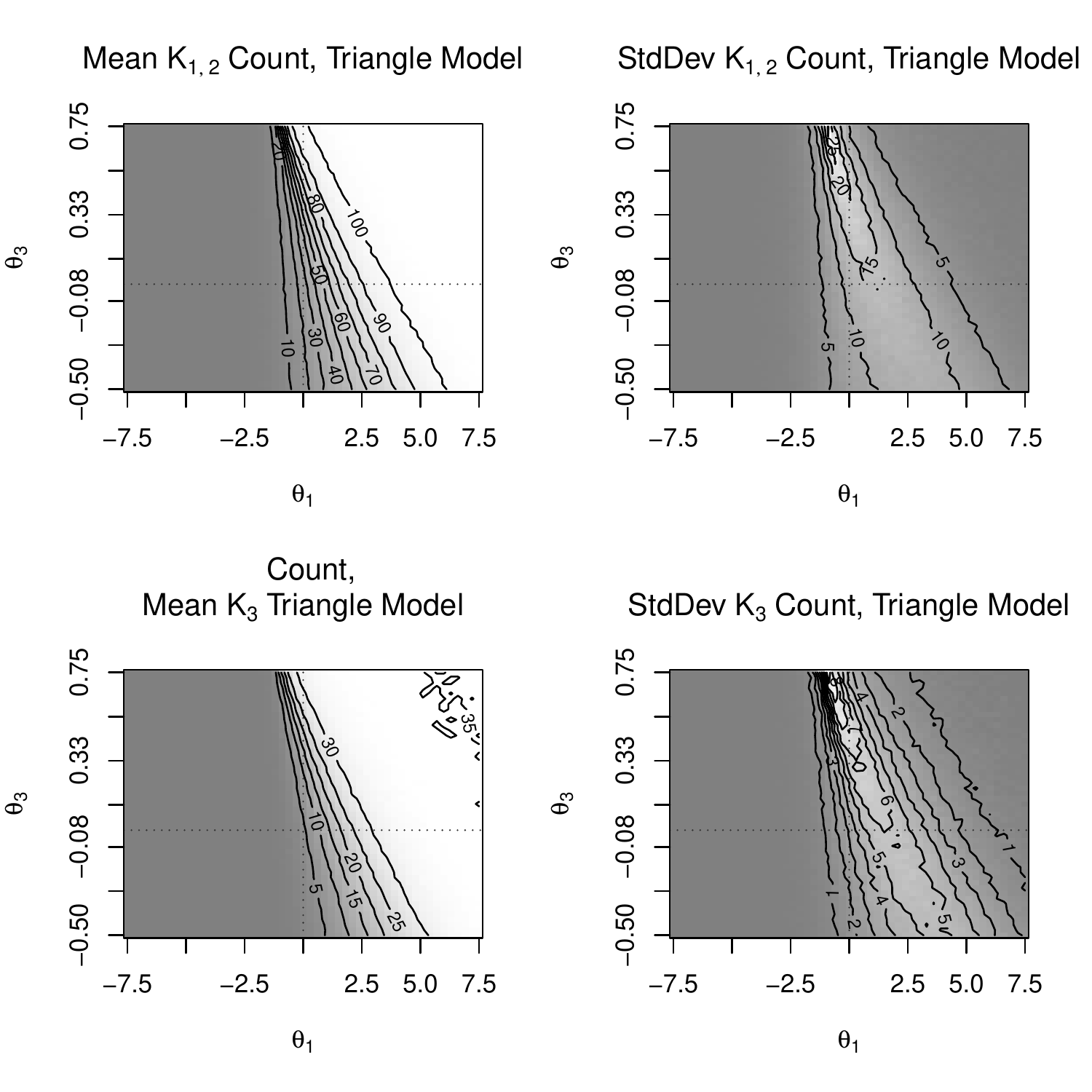}}}
\caption{\label{f_2stri_trimod} Two-Star and Triangle Statistics, Triangle Model}
\end{center}
\end{figure}

This happy state of affairs breaks down, unfortunately, when the forces encouraging high/low density and subgraph formation/dissolution are both extreme and in balance with one another.  This is visible within the right-hand panels of Figure~\ref{f_degandtime} in the longer log-convergence times found for models in the upper central-left and lower central-right regions of the parameter space for each subfamily.  The former regions are in the conventionally degenerate portion of the graph distributions, and correspond to mixtures of complete and empty graphs.  (This can be confirmed by examining the right-hand panels of Figures~\ref{f_2stri_2smod} and \ref{f_2stri_trimod}.)  For these models, $Y$'s transition time between extreme states may be extremely long, and coalescence difficult to obtain.  A related (if more subtle issue) is responsible for the increased coalescence time in the lower central-right regions of the parameter space.  While these models do not readily produce complete or empty graphs, they are near-degenerate in other respects: specifically, they tend to ``crystallize'' into a very small number of isomorphism classes with minimal numbers of two-stars or triangles (respectively).  Where these ``frozen'' structures differ by more than a single edge change, transition times between them may be long, thereby impairing mixing in the same manner as joint convergence to complete/empty graph mixtures.  This behavior was noted by \citet{handcock:ch:2003}, who found elongated, ray-like structures of such non-trivial degeneracy in the two-star model (see also \citet{robins.et.al:ajs:2005}).  Cross-referencing the top right panel of Figure~\ref{f_degandtime} with his results confirms that the longer coalescence times in the central-right region corresponds to the approach of this portion of the parameter space; a similar pattern is observed for the triangle subfamily (bottom right panel).  Taken together, then, our simulations suggest that the CFTP procedure will work well in fully non-degenerate regions of the parameter space (where mixing is not a problem), and in trivially degenerate regions of the parameter space which are characterized by convergence to a single structure.  For models with non-trivial degeneracy (characterized by concentration of probability mass on a small number of structures separated by large numbers of edge changes), coalescence times may become very long.

\section{Extensions \label{sec_ext}}

Although we have focused on the case of ERGs on fixed-order simple graphs, the approach developed here is easily extended to cover other cases.  Here, we briefly describe two of the most obvious: graphs which are directed, have loops, or are edge restricted; and models specified in biased net form.

\subsection{Directed Graphs, Loops, and Edge Restrictions \label{sec_ext_dig}}

Probably the most important extension of the simple procedure discussed here is to the case of graphs which are directed and/or which have loops (i.e., self-ties).  To accommodate the former case, we conduct all updates on the ordered pairs $(i,j)$, instead of the unordered pairs $\{i,j\}$, and relax the assumption of symmetry for $Y$, $L$, and $U$.  To allow loops, we similarly extend the set of possible edge updates to include the multisets $\{i,i\}$ in the undirected or pairs $(i,i)$ in the directed cases (respectively).  Otherwise, no changes are necessary: provided that $t$ is defined appropriately, the same procedure suffices to draw perfect samples from the corresponding graph or digraph distribution.  (Note that the directed case can lead to very different choices of statistics -- and $\Delta$ computation -- in practice.  This is a model parameterization issue, however, rather than a sampling issue.)    

Another common problem is the simulation of ERG models whose support is restricted to some subset $\mathcal{Y}'_n \subset \mathcal{Y}_n$.  This incorporates a large number of special cases, not all of which can be handled using the Gibbs sampler (and hence are capable of being simulated via our procedure).  Although we will not attempt a general treatment of this problem here, one important family of cases is especially easy to accommodate: specifically, ERGs for which particular edges are restricted to be present or absent \emph{ex ante}.  This includes the case of ERGs on bipartite graphs (i.e., graphs such that $V=V_r \cup V_c$, with $E\subseteq V_r \times V_c$), as well as ERGs for egocentric networks (i.e., graphs conditioned to have some known vertex, $v$, as a spanning star).  Our basic procedure may be modified for such models as follows.  Let $\mathcal{Y}'_n$ be the support of the edge-restricted model, and define $K'_n, N'_n$ to be the upper and lower bounds on $\mathcal{Y}'_n$ under the subgraph operator.  Both bounds exist and are unique: they are obtained by treating all free edges as present in the former case, and absent in the latter (leaving restricted edges untouched).  Now, let $\mathcal{E}$ be the set of unrestricted edge variables; we may then sample from $\mathcal{Y}'_n$ by initializing $L=N'_n$ and $U=K'_n$, and choosing updates at random from $\mathcal{E}$.  Since all elements of $\mathcal{Y}'_n$ are reachable through single-edge changes, and since the ordinal properties of the subgraph operator are unchanged, the results described here generalize directly to the restricted case.  (As with the directed case, however, edge restrictions may affect one's choice of $t$.)

\subsection{Biased Net Models \label{sec_bn}} 

One historically important alternative to the use of discrete exponential families to parameterize models for networks with complex dependence has been the ``biased net'' family of stochastic processes introduced by \citet{rapoport:bmb:1949a,rapoport:bmb:1949b,rapoport:bmb:1950}.  Treatment of this family in the literature has not always been consistent; the most inferentially well-developed framework is that discussed by \citet{skvoretz.et.al:sn:2004}, which parameterizes the family via approximations to the full conditionals of each edge.  Although the resulting expressions can become quite complex (depending on the order of the approximation involved), their simplest approximation can be parsimoniously described in terms of a linear form for the conditional log-probability of a non-edge, i.e.
\begin{equation}
\ln \Pr\left(Y_{ij}=0|y^c_{ij},\theta,t\right) \approx t\left(i,j,y^c_{ij}\right)^T \log \theta,
\end{equation}
where $t\in \{0,1,\dots\}^p$ is a vector of sufficient statistics and $\theta \in [0,1]^p$ is a parameter vector.  (Conventionally, the support of the model is taken to be the order-$n$ digraphs, although generalizations to undirected or edge-constrained graphs are straightforward.)  As generally understood, the elements of $t$ refer to counts of edge-formation (or so-called ``bias'') events for the $(i,j)$ edge variable, with $\theta$ being the corresponding conditional probabilities that an edge does \emph{not} form given a particular event.  This interpretation leads to the more conventional form
\begin{equation}
\Pr\left(Y_{ij}=1|y^c_{ij},\theta,t\right) \approx 1-\prod_{k=1}^p\left(1-\theta^*_k\right)^{t_k\left(i,j,y^-_{ij}\right)} \label{e_bn}
\end{equation}
with $\theta^*=1-\theta$.  As this expression suggests, each potential edge within the biased net model can be thought of as being exposed to a series of bias events, each of which independently leads to the formation of the edge with probability $\theta^*_k$.  If no bias event triggers edge formation, then the edge is taken to be absent.  Generally, $t$ is taken to include a constant event (representing a base rate of tie formation), with events relating to reciprocated edges and shared partners being the most widely researched (see, e.g., \citet{skvoretz:sn:1985,skvoretz:sn:1990,skvoretz.et.al:sn:2004}).

While the above is an extremely attractive and intuitive framework, it suffers from the problem that an expression for the joint distribution of $Y$ under such a family is not known; indeed, it can be shown that some such families have \emph{no} joint distribution (i.e., there exists no random graph $Y$ whose full conditionals are compatible with Equation~\ref{e_bn} for some choices of $t$ and $\theta$).  The first-order biased net model family is as such ill-posed.\footnote{\citet{skvoretz.et.al:sn:2004} also consider higher order approximations, which may not suffer similar difficulties.}  However, a very similar family can be constructed, which preserves the intuition of the original.  Specfically, let us imagine a social process in which $Y$ evolves in discrete steps $\ldots,Y^{(0)},Y^{(1)},\ldots$ such that at iteration a single, randomly chosen $i,j$ edge is either added to or removed from the graph.  Given that $(i,j)$ refers to the edge selected at arbitrary time $t$, the graph then evolves via the following process:
\begin{equation}
Y^{(t+1)}_{gh} = \begin{cases} 1 & (g,h)=(i,j) \ \mathrm{and}\ u^{(t)} < 1-\prod_{k=1}^p\left(1-\theta^*_k\right)^{t_k\left(i,j,\left(Y^{(t)}\right)^-_{ij}\right)} \\ 0 & (g,h)=(i,j) \ \mathrm{and}\ u^{(t)} \ge 1-\prod_{k=1}^p\left(1-\theta^*_k\right)^{t_k\left(i,j,\left(Y^{(t)}\right)^-_{ij}\right)} \\ Y_{gh}^{(t)} & (g,h) \neq (i,j)\end{cases} \label{e_pgibbs}
\end{equation}
where, as previously, $\ldots,u^{(0)},u^{(1)},\ldots$ is a set of iid uniform deviates on the $[0,1]$ interval.  This process obviously defines a Markov chain that closely resembles a random-update Gibbs sampler, and indeed such a process is a Gibbs sampler for the joint distribution associated with the full conditionals of Equation~\ref{e_bn} where such a distribution exists.  Where such a distribution does not exist, however, the associated Markov chain is still well-defined, and can be viewed as a model of social process in its own right.\footnote{Such chains (based on approximate full conditionals with no well-formed joint distribution) are sometimes called \emph{pseudo-Gibbs samplers} \citep{chen.ip:jscs:2014}.}  Substantively, one interpretation of such a model is as follows.  At each time step, a randomly selected individual considers the state of his or her relationship with another randomly selected individual.  The current state of the network generates a set of bias events, any of which may trigger the creation or retention of an edge.  If no such ``trigger'' resolves, the relationship relaxes to (or stays at) the null state.  A new pair is then considered, and the process continues in like vein.

\citet{butts:sw:2000} implements approximate simulation of draws from the above process using MCMC.  The method used is sequential sampling using Equation~\ref{e_pgibbs}; under standard regularity assumptions, a sample $y^{(1)},y^{(2)},\ldots$ from this process converges in the limit of iterations to its (uncharacterized) equilibrium distribution.  Although straightforward, this method is clearly approximate in the finite-sample case.  By applying a variation on the simulation scheme presented in this paper, however, it is possible to obtain exact samples from the majority of biased net models employed in the literature (as expressed in Markov chain form).  Specifically, we consider here the case in which, for all pairs $(i,j)$ and all statistics $t_k$, $t_k(i,j,y)\le t_k(i,j,y')$ for all $y \subseteq y'$. This monotonicity condition is satisfied by the well-known ``parent,'' ``sibling,'' and ``double-role'' biases employed in the biased net literature (as well as variants thereof).

Our approach is straightforward.  As usual, we initialize $L$ to the empty graph and $U$ to the complete graph (in their directed guises) at some time $-i$.  We then evolve $L$ and $U$ toward time 0 by the rule of Equation~\ref{e_bn}, using a shared sequence of random edge variables to update and random ``coins'' $u^{(-j)} \sim \mathrm{Unif}(0,1)$.  Since $\theta, t$ are non-negative and $t$ is weakly monotone with respect to the subgraph operator, it follows that the conditional probability that $Y^{(-j)}_{kl}=1$ given $Y^{(-j-1)}$ is also weakly monotone with respect to subgraph ordering.  Thus $\Pr\left(L^{(-j)}_{kl}=1|L^{(-j-1)},\theta,t\right)\le\Pr\left(Y^{(-j)}_{kl}=1|Y^{(-j-1)},\theta,t\right)\le\Pr\left(U^{(-j)}_{kl}=1|U^{(-j-1)},\theta,t\right)$ for each $j<i$, and $Y$ is bounded by $L$ and $U$.  By the usual arguments, then, $L^{(-j)}=U^{(-j)}$ is a sufficient condition for coalescence, and the corresponding value of $Y^{(0)}$ is a draw from the equilibrium distribution of $Y$.  Since $Y$ was constructed to be a Gibbs sampler for the biased net model with statistics $t$ and parameter vector $\theta$, it follows that the procedure produces samples from the target model.

In passing, it should be noted that little is known regarding exponential family representations of biased net processes (beyond trivial cases such as the parent bias model).  A general mapping from biased net to ERG parameterizations (and, where possible, the reverse) would serve to make this venerable line of work more accessible to researchers within the broader network statistics community.

\subsubsection{Example: Probing the Sibling Bias}

In the language of biased net theory, a ``sibling'' bias event for the $(i,j)$ edge variable is produced by every vertex $k$ such that $k \to i$ and $k \to j$ (i.e., $i$ and $j$ have an incoming shared partner).  Formally, the sibling bias is parameterized via the statistic $t(i,j,y)=\sum_{k=1}^n y_{ki}y_{kj}$.  Clearly, sibling events promote transitivity in the sense that they enhance the conditional probability of observing $i,j,k$ triples such that $k \to i$, $i\to j$, and $k \to j$, though they operate by encouraging the formation of ``pre-closed'' two-paths rather than by encouraging the closure of open two-paths.  Given their distinct mode of operation, it is natural to ask whether the sibling bias leads to phase transitions analogous to those of the edge-triangle model \citep{strauss:siam:1986}.  Using the exact sampling mechanism described above, we can answer this question through simulation.

The top left panel of Figure~\ref{f_sibsim} shows the distributions of density and transitivity for exact draws from a 25-node network biased net model with a baseline edge probability ($d$) of 0.125 and sibling effects ($\sigma$) ranging from 0 to 1 in increments of 0.02; 500 draws were taken per parameter value.  As can be seen, density and transitivity track closely over the entire parameter space, with a very sharp transition from a relatively sparse, intransitive regime below $\sigma \approx 0.1$ to an extremely dense, transitive regime.  This transition is very similar to the phenomenon observed in the edge-triangle ERGM.  In the ERGM case, another standard observation is that the transition to the dense phase occurs with increasing graph order, for a fixed triangle parameter.  The top right panel of Figure~\ref{f_sibsim} examines the parallel question for the sibling bias, here fixing the parameter at 0.1 (with the baseline tie probability set to a constant mean degree of 3) and varying the number of vertices from 5 to 60 (5000 draws per condition).  While we see considerable variability in the case of very small graphs, increasing $|V|$ leads the system to settle into a sparse phase before transitioning sharply to a nearly complete phase at $|V|\approx 25$.  Thus the baseline/sibling bias model strongly resembles the edge-triangle model in behavior, despite being very differently parameterized.

\begin{figure}
\begin{center}
\includegraphics[width=6in,height=4.8in]{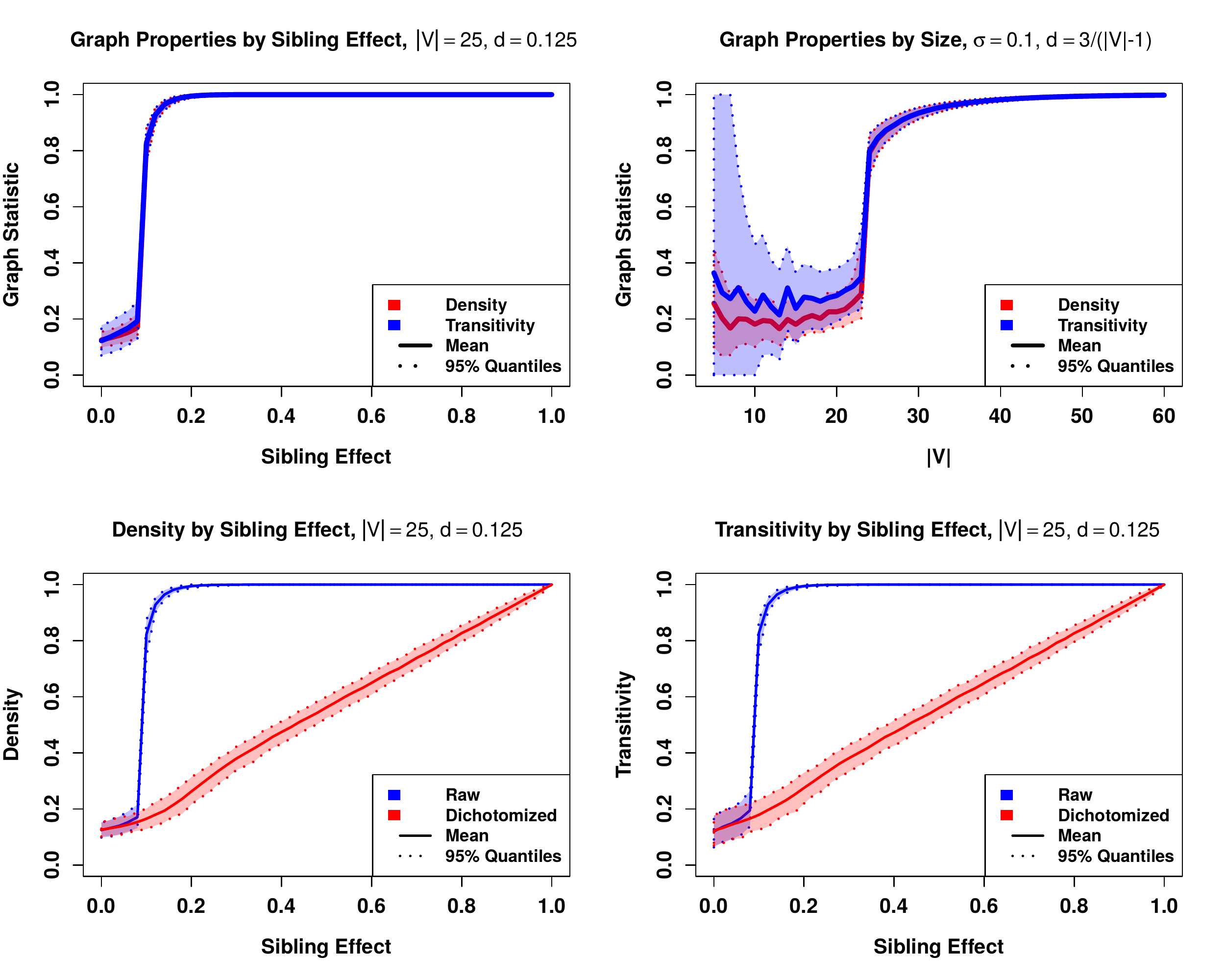}
\caption{\label{f_sibsim} Exact Biased Net Simulations.  (top left) Density and Transitivity by Sibling Effect Strength.  (top right)  Density and Transitivity by Size.  (bottom left) Density by Sibling Effect Strength, Raw versus Dichotomized Statistics. (bottom right) Transitivity by Sibling Effect Strength, Raw versus Dichotomized Statistics. }
\end{center}
\end{figure}

How could this behavior be altered?  One idea is to change the nature of the sibling bias.  Conventionally, we assume that every incoming shared partner creates an independent opportunity for a tie to form, thus leading to a cascade of runaway edge formation once a critical density threshold is reached.  An alternative assumption is that only the first incoming shared partner is important for prompting tie formation: if a tie is not induced by this bias event, subsequent shared partners have no additional effect.  We can implement this via a dichotomized version of the incoming shared partner statistic shown above, with the statistic being equal to 1 if any shared partner is present, and 0 otherwise.  Because this statistic is still monotone, we can employ it with our exact sampler.  Results from simulations varying $\sigma$ in dichotomized and ``raw'' form ($|V|=25$, $d=0.125$, 500 draws per condition) are shown in the bottom panels of Figure~\ref{f_sibsim}.  As can be seen, the dichotomized sibling effect successfully eliminates the phase transition behavior arising from the convetional sibling bias, instead leading to density and transitivity values that scale almost linearly as a function of $\sigma$.  

These simple examples illustrate how the exact sampling method can be used to explore the behavior of biased net models, even in regimes for which those models are not well-behaved.  The simulations reveal that simple examples of this model class poses many of the same challenges found in simple ERGMs, but also that some of the same strategies used in ERGMs to avoid degeneracy (here, bounding the strength of a dependence effect) can be adapted to biased nets.  Having a tool for exact simulation from this model class greatly facilitates exploration of alternative options for model parameterization, and will hopefully encourage more work in this area.

\section{Conclusion}

In this paper, we have introduced a method for drawing ``perfect'' samples from random graph distributions in ERG form, as well as certain other model families (including one implementation of the classic ``biased net'' framework).  This method uses a variant of Propp and Wilson's (\citeyear{propp.wilson:rsa:1996}) Coupling From The Past, with the single edge update Gibbs sampler used as the underlying chain.  Although this chain is not monotone, coalescence detection is possible by ``sandwiching'' the states of the Gibbs sampler between the states of two dominating processes, all of which are guaranteed to satisfy a partial order condition (namely, subgraph inclusion).  For ERGs based on the most common types of statistics (subgraph census statistics), computation is fairly straightforward, and requires only change scores on the two bounding processes (resulting in updates which are of the same complexity as the conventional Gibbs sampler).  We have illustrated the use of this method via an application to the Markov graphs of \citet{frank.strauss:jasa:1986}, exploring its behavior for two model subfamilies (the two-star or edge clustering and triangle models, respectively).  Algorithm performance is good for nondegenerate models, and for degenerate models in which the model distribution collapses onto a single structure; degenerate models which collapse onto mixtures of structures which cannot be reached via single edge changes lead to very poor mixing in the underlying Gibbs sampler, and thus long coalescence times.  The method shown here is not therefore a panacea for model degeneracy (a well-known challenge with ERGs), although it does provide an assurance that samples obtained are exact (to the limit of one's underlying numerical and pseudo-random infrastructure).  Arguably, another virtue of this method versus conventional Markov chain Monte Carlo is that poor performance in the case of non-trivial degeneracy is made immediately evident by long simulation times, rather than being concealed in the sequence of (potentially difficult to diagnose) graph statistics.  This may allow for faster identification of pathological cases, and reduced risk of erroneous generalization from inadequate MCMC samples.  Given the growing importance of random graph models throughout the social and biological sciences, it is hoped that approaches such as this one will facilitate the study of relational data across a range of substantive applications.

\bibliography{ctb}

\begin{thebibliography}{}

\bibitem[Barab{\'{a}}si and Albert, 1999]{barabasi.albert:s:1999}
Barab{\'{a}}si, A.-L. and Albert, R. (1999).
\newblock Emergence of scaling in random networks.
\newblock {\em Science}, 206:509--512.

\bibitem[Besag, 1974]{besag:jrssB:1974}
Besag, J. (1974).
\newblock Spatial interaction and the statistical analysis of lattice systems.
\newblock {\em Journal of the Royal Statistical Society, Series B},
  36(2):192--236.

\bibitem[Besag, 1975]{besag:ts:1975}
Besag, J. (1975).
\newblock Statistical analysis of non-lattice data.
\newblock {\em The Statistician}, 24(3):179--195.

\bibitem[Bhamidi et~al., 2011]{bhamidi.et.al:aap:2011}
Bhamidi, S., Bresler, G., and Sly, A. (2011).
\newblock Mixing time of exponential random graphs.
\newblock {\em The Annals of Applied Probability}, 21(6):2146--2170.

\bibitem[Burda et~al., 2004]{burda.et.al:prevE:2004}
Burda, Z., Jurkiewicz, J., and Kryzwicki, A. (2004).
\newblock Network transitivity and matrix models.
\newblock {\em Physical Review E}, 69:026106.

\bibitem[Butts, 2000]{butts:sw:2000}
Butts, C.~T. (2000).
\newblock The sna package for the r statistical computing system.

\bibitem[Butts, 2011]{butts:sm:2011b}
Butts, C.~T. (2011).
\newblock Bernoulli graph bounds for general random graphs.
\newblock {\em Sociological Methodology}, 41:299--345.

\bibitem[Butts, 2015]{butts:jms:2015}
Butts, C.~T. (2015).
\newblock A novel simulation method for binary discrete exponential families,
  with application to social networks.
\newblock {\em Journal of Mathematical Sociology}, 39(3):174--202.

\bibitem[Chen and Ip, 2014]{chen.ip:jscs:2014}
Chen, S.-H. and Ip, E.~H. (2014).
\newblock Behaviour of the gibbs sampler when conditional distributions are
  potentially incompatible.
\newblock {\em Journal of Statistical Computation and Simulation}, pages 1--10.

\bibitem[Davis, 1970]{davis:asr:1970}
Davis, J.~A. (1970).
\newblock Clustering and hierarchy in interpersonal relations: Testing two
  theoretical models on 742 sociograms.
\newblock {\em American Sociological Review}, 35:843--852.

\bibitem[Frank and Strauss, 1986]{frank.strauss:jasa:1986}
Frank, O. and Strauss, D. (1986).
\newblock {M}arkov graphs.
\newblock {\em Journal of the American Statistical Association}, 81:832--842.

\bibitem[Gamerman, 1997]{gamerman:bk:1997}
Gamerman, D. (1997).
\newblock {\em Markov Chain Monte Carlo: Stochastic Simulation for Bayesian
  Inference}.
\newblock Chapman and Hall, London.

\bibitem[Gelman, 1996]{gelman:ch:1996}
Gelman, A. (1996).
\newblock Inference and monitoring convergence.
\newblock In Gilks, W.~R., Richardson, S., and Spiegelhalter, D.~J., editors,
  {\em Markov Chain Monte Carlo In Practice}. Chapman and Hall/CRC, Boca Raton,
  FL.

\bibitem[Geyer and Thompson, 1992]{geyer.thompson:jrssC:1992}
Geyer, C.~J. and Thompson, E.~A. (1992).
\newblock Constrained {M}onte {C}arlo maximum likelihood calculations (with
  discussion).
\newblock {\em Journal of the Royal Statistical Society, Series C},
  54:657--699.

\bibitem[H\"{a}ggstr\"{o}m and Jonasson, 1999]{haggstrom.jonasson:jap:1999}
H\"{a}ggstr\"{o}m, O. and Jonasson, J. (1999).
\newblock Phase transition in the random triangle model.
\newblock {\em Journal of Applied Probability}, 36:1101--1115.

\bibitem[Handcock, 2003]{handcock:ch:2003}
Handcock, M.~S. (2003).
\newblock Statistical models for social networks: Inference and degeneracy.
\newblock In Breiger, R., Carley, K.~M., and Pattison, P., editors, {\em
  Dynamic Social Network Modeling and Analysis}, pages 229--240. National
  Academies Press, Washington, DC.

\bibitem[Hoff et~al., 2002]{hoff.et.al:jasa:2002}
Hoff, P.~D., Raftery, A.~E., and Handcock, M.~S. (2002).
\newblock Latent space approaches to social network analysis.
\newblock {\em Journal of the American Statistical Association},
  97(460):1090--1098.

\bibitem[Holland et~al., 1983]{holland.et.al:sn:1983}
Holland, P.~W., Laskey, K.~B., and Leinhardt, S. (1983).
\newblock Stochastic blockmodels: First steps.
\newblock {\em Social Networks}, 5(2):109--137.

\bibitem[Holland and Leinhardt, 1972]{holland.leinhardt:ajs:1972}
Holland, P.~W. and Leinhardt, S. (1972).
\newblock Some evidence on the transitivity of positive interpersonal
  sentiment.
\newblock {\em American Journal of Sociology}, 72:492--513.

\bibitem[Holland and Leinhardt, 1981]{holland.leinhardt:jasa:1981}
Holland, P.~W. and Leinhardt, S. (1981).
\newblock An exponential family of probability distributions for directed
  graphs (with discussion).
\newblock {\em Journal of the American Statistical Association},
  76(373):33--50.

\bibitem[Hunter and Handcock, 2006]{hunter.handcock:jcgs:2006}
Hunter, D.~R. and Handcock, M.~S. (2006).
\newblock Inference in curved exponential family models for networks.
\newblock {\em Journal of Computational and Graphical Statistics}, 15:565--583.

\bibitem[Hunter et~al., 2008]{hunter.et.al:jss:2008}
Hunter, D.~R., Handcock, M.~S., Butts, C.~T., Goodreau, S.~M., and Morris, M.
  (2008).
\newblock ergm: A package to fit, simulate and diagnose exponential-family
  models for networks.
\newblock {\em Journal of Statistical Software}, 24(3).

\bibitem[Jonasson, 1999]{jonasson:jap:1999}
Jonasson, J. (1999).
\newblock The random triangle model.
\newblock {\em Journal of Applied Probability}, 36:852--867.

\bibitem[Kendall, 1997]{kendall:proc:1997}
Kendall, W.~S. (1997).
\newblock Perfect simulation for spatial point processes.
\newblock In {\em Simulation of Stochastic Processes in Engineering Meeting},
  Istanbul.

\bibitem[Kendall and M{\o}ller, 2000]{kendall.moller:aap:2000}
Kendall, W.~S. and M{\o}ller, J. (2000).
\newblock Perfect simulation using dominating processes on ordered spaces, with
  application to locally stable point processes.
\newblock {\em Advances in Applied Probability}, 32(3):844--865.

\bibitem[Krivitsky et~al., 2011]{krivitsky.et.al:statm:2011}
Krivitsky, P.~N., Handcock, M.~S., and Morris, M. (2011).
\newblock Adjusting for network size and composition effects in
  exponential-family random graph models.
\newblock {\em Statistical Methodology}, 8(4):319--339.

\bibitem[Lusher et~al., 2012]{lusher.et.al:bk:2012}
Lusher, D., Koskinen, J., and Robins, G. (2012).
\newblock {\em Exponential Random Graph Models for Social Networks: Theory,
  Methods, and Applications}.
\newblock Cambridge University Press, Cambridge.

\bibitem[Newman, 2003]{newman:siamr:2003}
Newman, M. E.~J. (2003).
\newblock The structure and function of complex networks.
\newblock {\em SIAM Review}, 45(2):167--256.

\bibitem[Park and Newman, 2004]{park.newman:prevE:2004}
Park, J. and Newman, M. (2004).
\newblock Solution of the 2-star model of a network.
\newblock {\em Physical Review E}, 70:066146.

\bibitem[Pattison and Robins, 2002]{pattison.robins:sm:2002}
Pattison, P.~E. and Robins, G.~L. (2002).
\newblock Neighborhood-based models for social networks.
\newblock {\em Sociological Methodology}, 32:301--337.

\bibitem[Pattison and Wasserman, 1999]{pattison.wasserman:bjmsp:1999}
Pattison, P.~E. and Wasserman, S. (1999).
\newblock Logit models and logistic regressions for social networks: Ii.
  multivariate relations.
\newblock {\em British Journal of Mathematical and Statistical Psychology},
  52:169--193.

\bibitem[Propp and Wilson, 1996]{propp.wilson:rsa:1996}
Propp, J.~G. and Wilson, D.~B. (1996).
\newblock Exact sampling with coupled {M}arkov chains and applications to
  statistical mechanics.
\newblock {\em Random Structures and Algorithms}, 9(1{--}2):223--252.

\bibitem[Pu et~al., 2012]{pu.et.al:nips:2012}
Pu, W., Choi, J., Amir, E., and Espelage, D. (2012).
\newblock Learning exponential random graph models.
\newblock In {\em 2012 Neural Information Processing Systems (NIPS) Workshop on
  Algorithmic and Statistical Approaches for Large Social Network Data Sets}.

\bibitem[Rapoport, 1949a]{rapoport:bmb:1949a}
Rapoport, A. (1949a).
\newblock Outline of a probabilistic approach to animal sociology {I}.
\newblock {\em Bulletin of Mathematical Biophysics}, 11:183--196.

\bibitem[Rapoport, 1949b]{rapoport:bmb:1949b}
Rapoport, A. (1949b).
\newblock Outline of a probabilistic approach to animal sociology {II}.
\newblock {\em Bulletin of Mathematical Biophysics}, 11:273--281.

\bibitem[Rapoport, 1950]{rapoport:bmb:1950}
Rapoport, A. (1950).
\newblock Outline of a probabilistic approach to animal sociology {III}.
\newblock {\em Bulletin of Mathematical Biophysics}, 12:7--17.

\bibitem[Ripley, 1977]{ripley:jrssB:1977}
Ripley, B.~D. (1977).
\newblock Modeling spatial patterns (with discussion).
\newblock {\em Journal of the Royal Statistical Society, Series B},
  39:172--212.

\bibitem[Robins et~al., 2005]{robins.et.al:ajs:2005}
Robins, G.~L., Pattison, P.~E., and Woolcock, J. (2005).
\newblock Small and other worlds: Network structures from local processes.
\newblock {\em American Journal of Sociology}, 110(4):894--936.

\bibitem[Schweinberger, 2011]{schweinberger:jasa:2011}
Schweinberger, M. (2011).
\newblock Instability, sensitivity, and degeneracy of discrete exponential
  families.
\newblock {\em Journal of the American Statistical Association}.

\bibitem[Schweinberger and Handcock, 2015]{schweinberger.handcock:jrssB:2015}
Schweinberger, M. and Handcock, M.~S. (2015).
\newblock Local dependence in random graph models: characterization, properties
  and statistical inference.
\newblock {\em Journal of the Royal Statistical Society: Series B (Statistical
  Methodology)}, 77(3):647--676.

\bibitem[Skvoretz, 1985]{skvoretz:sn:1985}
Skvoretz, J. (1985).
\newblock Random and biased networks: Simulations and approximations.
\newblock {\em Social Networks}, 7:256--261.

\bibitem[Skvoretz, 1990]{skvoretz:sn:1990}
Skvoretz, J. (1990).
\newblock Biased net theory: Approximations, simulations, and observations.
\newblock {\em Social Networks}, 12:217--238.

\bibitem[Skvoretz et~al., 2004]{skvoretz.et.al:sn:2004}
Skvoretz, J., Fararo, T.~J., and Agneessens, F. (2004).
\newblock Advances in biased net theory: Definitions, derivations, and
  estimations.
\newblock {\em Social Networks}, 26:113--139.

\bibitem[Snijders, 1981]{snijders:sn:1981}
Snijders, T. A.~B. (1981).
\newblock The degree variance: an index of graph heterogeneity.
\newblock {\em Social Networks}, 3(3):163--223.

\bibitem[Snijders, 2002]{snijders:joss:2002}
Snijders, T. A.~B. (2002).
\newblock {M}arkov chain {M}onte {C}arlo estimation of exponential random graph
  models.
\newblock {\em Journal of Social Structure}, 3(2).

\bibitem[Strauss, 1986]{strauss:siam:1986}
Strauss, D. (1986).
\newblock On a general class of models for interaction.
\newblock {\em SIAM Review}, 28(4):513--527.

\bibitem[Wasserman and Pattison, 1996]{wasserman.pattison:p:1996}
Wasserman, S. and Pattison, P.~E. (1996).
\newblock Logit models and logistic regressions for social networks: I. an
  introduction to {M}arkov graphs and $p*$.
\newblock {\em Psychometrika}, 60:401--426.

\bibitem[Wasserman and Robins, 2005]{wasserman.robins:ch:2005}
Wasserman, S. and Robins, G.~L. (2005).
\newblock An introduction to random graphs, dependence graphs, and $p*$.
\newblock In Carrington, P.~J., Scott, J., and Wasserman, S., editors, {\em
  Models and Methods in Social Network Analysis}, chapter~10, pages 192--214.
  Cambridge University Press, Cambridge.

\bibitem[Watts and Strogatz, 1998]{watts.strogatz:n:1998}
Watts, D.~J. and Strogatz, S.~H. (1998).
\newblock Collective dynamics of {`}small-world{'} networks.
\newblock {\em Nature}, 393:440--442.

\end{thebibliography}


\end{document}